\documentclass[reprint,aps,pra]{revtex4-2} 
\usepackage{acro}
\usepackage{xcolor}
\usepackage{dsfont}
\usepackage{amsmath}
\usepackage{amssymb}
\usepackage{amsthm}
\usepackage{colortbl}
\usepackage{algorithm}
\usepackage{quantikz}
\usepackage{hyperref}
\usepackage{orcidlink}
\usepackage[caption=false]{subfig}
\usepackage{enumitem}
\usepackage[nameinlink,capitalize]{cleveref}

\newcommand{\true}{\relax\ifmmode\mathrm{true}\else\textit{true}\fi}
\newcommand{\false}{\relax\ifmmode\mathrm{false}\else\textit{false}\fi}

\newtheorem{theorem}{Theorem}
\newtheorem{definition}{Definition}
\newtheorem{lemma}{Lemma}
\DeclareMathOperator{\tr}{tr}
\DeclareMathOperator*{\ubigvee}{\underline{\bigvee}}

\definecolor{mygreen}{rgb}{0.45098, 0.635294, 0.215686}
\definecolor{myblue}{rgb}{0, 0.4, 0.552941}
\definecolor{myorange}{rgb}{0.878431, 0.690196, 0.180392}
\definecolor{mygrey}{rgb}{0.407843, 0.407843, 0.403922}

\DeclareAcronym{NP}{
short = NP,
long = nondeterministic polynomial time
}
\DeclareAcronym{PVM}{
short = PVM,
long = projector-valued measure
}
\DeclareAcronym{FAPP}{
short = FAPP,
long = for all practical purposes
}
\DeclareAcronym{POVM}{
short = POVM,
long = positive operator-valued measure
}
\DeclareAcronym{BQP}{
short = BQP,
long = bounded-error quantum polynomial time
}
\DeclareAcronym{CSS}{
short = CSS,
long = Calderbank-Shor-Steane
}
\DeclareAcronym{CNF}{
short = CNF,
long = conjunctive normal form
}
\DeclareAcronym{XOR}{
short = XOR,
long = exclusive or
}
\DeclareAcronym{SSR}{
short = SSR,
long = superselection rule
}

\begin{document}

\preprint{APS/123-QED}

\title{Unobservables and Decoherence from Complexity}%

\author{Michael Epping\orcidlink{0000-0003-0950-6801}}%
\email{Michael.Epping@dlr.de}
\author{Jochen Szangolies\orcidlink{0000-0003-1141-3225}}%
\email{Jochen.Szangolies@dlr.de}
\affiliation{Institute of Software Technology, German Aerospace Center (DLR), Germany.}%

\date{\today}

\begin{abstract}
The interface between the quantum and the classical is an intriguing and, at times, hotly contested subject of ongoing research. 
The quantum regime is characterized by interference, made possible by the superposition principle, while such phenomena are absent in macroscopic, everyday experience. 
Here, we investigate the link of this absence (or, as we will argue, unobservability) to computational complexity. 
We show how the assumption that quantum systems cannot solve NP-complete problems efficiently implies that certain formally valid quantum measurements on finite-dimensional systems are unperformable.
We study several consequences of this restriction.
First, Pauli matrices in an inconveniently transformed basis are a simple example of unobservables.
Furthermore, some quantum states are not connected by any physically realizable time evolution.
Finally there are quantum states whose coherence cannot be observed, i.e. superpositions of pure quantum states which are indistinguishable from mixtures. 
We discuss the connection of this phenomenon to the presence of superselection sectors.
Our results suggest that the apparent classicality of macroscopic systems may be partly due to limitations on measurements and time evolutions imposed by computational complexity.
\end{abstract}

\maketitle

Individual quantum measurements can be viewed as questions posed by an observer and answered by nature.
More complex phenomena emerge with the shift from observers to active agents that adjust future actions based on measurement outcomes~\cite{fuchs2017notwithstanding}.
One interesting example is the undoing of measurements in case of undesired outcomes~\cite{wigner1995remarks,ueda1998logical,jordan2010uncollapsing,schindler2013undoing,zurek2018quantum}.
This potentially provides a powerful tool to the agent.
In this paper we demonstrate that in certain instances it is indeed too powerful to be compatible with the existence of computationally hard problems.
We construct a formally valid \ac{POVM} that would allow an efficient solution to the 3SAT problem. Thus, we propose that performing this measurement must itself be prohibitively complex, and thus it is not realized in nature.
Thus, while quantum computing is a key area where our results apply, our argument is independent of task-specific quantifiers of complexity such as circuit complexity, and also applies to (yet unknown) ``native gates'' as provided by nature.
For sufficiently large problem instances, such measurements are thus unreachable for any observer with finite resources. Hence, we term the associated quantities \emph{unobservables}.
In particular, we show that there exist states such that their superposition is operationally indistinguishable from a mixture.
Thus, a main consequence of unobservables is decoherence in the sense that coherence of certain states cannot be observed.
This opens up the possibility that complexity, rather than environmental interactions as in the more common approach~\cite{joos1985emergence, joos2013decoherence}, may be a source of decoherence. 

Prior work by Aaronson et al. has demonstrated a link between the complexity of measuring superpositions and that of unitary time evolution of states~\cite{aaronson2020hardness}.
Here, we link the observability of coherence directly to a computationally hard problem for the first time.

Quantum measurement is usually cast in terms of irreversibility. However, for present purposes, it is important to note that sometimes quantum measurements can indeed be undone.
One approach towards the quantum measurement problem(s)~\cite{fine1973two, maudlin1995three} is considering a global unitary time evolution that entangles system and environment.
With full control over the environment we could in principle undo the measurement.
Similar reasoning applies to `Wigner's Friend'-type scenarios~\cite{wigner1995remarks,zurek2018quantum}.
Other schemes for `undoing' measurements are based on weak measurements~\cite{ueda1998logical,jordan2010uncollapsing}.
Measurements can also be treated as errors that are corrected using quantum error correction codes~\cite{schindler2013undoing}.
The smallest possible example is the four-qubit \ac{CSS} code~\cite{CalderbankShor96, Steane1996} which encodes two logical qubits into four physical ones. It can correct a single measurement `error' at a known position, see Appendix~\ref{sec:fourqubitcode}. 

We proceed to sketch an algorithm that attempts to exploit the undoing of measurements by performing a follow-up measurement to `reset' unwanted outcomes (see Appendix~\ref{sec:algorithm} for details).
The algorithm is designed to determine the satisfiability of a propositional logic formula $F$ in \ac{CNF}. $F$ contains $N$ variables $v_1$, $v_2$, \dots, $v_N$ which appear in $M$ clauses $C_m$ of exactly three different literals each.
We denote literals $l_{mj} = \sigma_{mj} v_{q_{mj}}$, where $\sigma_{mj}=\pm 1$ and $q_{mj}\in\{1,2,\dots,N\}$ (a negative literal $-v_i$ corresponds to $\lnot v_i$). 
Then
\begin{equation}
F = \bigwedge_{m=1}^M \bigvee_{j=1}^3  l_{mj}.
\end{equation}
We associate a qubit with each of the $N$ variables, such that $\ket{0}_i$ and $\ket{1}_i$ correspond to the values \false{} and \true{} of the variable  $v_i$, respectively.
We say that a state fulfills a clause, if the state has no component violating it.
\begin{figure}[tbph]%
\centering%
\subfloat[$\mathcal{M}(v_1 \lor v_2 \lor v_3)$]{%
\hspace{0.3cm}%
\begin{quantikz}[column sep=2mm]%
\lstick{$v_1$} & \octrl{1} & \\%
\lstick{$v_2$} &\octrl{1} &\\%
\lstick{$v_3$} &\octrl{1} & \\%
\lstick{$\ket{+}$} &\ctrl{-1}& \meter{X}%
\end{quantikz}%
\hspace{0.3cm}%
\label{fig:measurementforclause}%
}%
\hspace{0.5cm}%
\subfloat[$\mathcal{M}(v_1\veebar v_2 \veebar v_3)$]{\hspace{0.5cm}%
\begin{quantikz}[column sep=2mm]%
\lstick{$v_1$} & \ctrl{3} & & &\\%
\lstick{$v_2$} & & \ctrl{2} & &\\%
\lstick{$v_3$} & & & \ctrl{1} & \\%
\lstick{$\ket{+}$} &\ctrl{0}&\ctrl{0}&\ctrl{0}& \meter{X}%
\end{quantikz}%
\label{fig:measurementforxorclause}%
}%
\caption{Circuit implementations for $\mathcal{M}$. We use open and filled circles if the gate is controlled by $\ket{0}$ or $\ket{1}$ corresponding to positive and negative literals, respectively.}%
\end{figure}%
The algorithm is based on the following quantum measurements.
In a slight abuse of notation we use the same symbols for the general case and the specific formula $F$ defined above.
Let $\Pi(C_m)$ and $\Pi(\lnot C_m)$ project onto the subspace of states fulfilling and violating clause $C_m$, respectively.
The corresponding projective measurement is
\begin{equation}
    \mathcal{M}(C_m) = \{\Pi(C_m),\; \Pi(\lnot C_m)\}. \label{eq:defmeasurement}
\end{equation}
A possible implementation of $\mathcal{M}(C_m)$ is shown in \cref{fig:measurementforclause}. 
Analogously we define $\mathcal{M}$ also for \ac{XOR}-clauses with a possible implementation shown in \cref{fig:measurementforxorclause}.
Finally, we define for a formula $F=\bigwedge_{i=1}^M C_i$ the \emph{sandwich}-\ac{POVM}
\begin{equation}
    \mathcal{N}(F) =\{P_x(F) Q_y P_x(F)| 0\leq x < 2^M, 0\leq y< 2^N\},
\end{equation}
where 
\begin{equation}
    P_x(F) = \prod_{i=1}^M\Pi((-1)^{x_i} C_i)
\end{equation}
can be considered a sequential measurement of the $\mathcal{M}(c_i)$ and
\begin{equation}
    Q_y = \bigotimes_{i=1}^N \frac{\mathds{1}+(-1)^{y_i} X}{2}
\end{equation}
is a measurement in the $X$-basis with outcomes $x=x_1,x_2,...,x_t$ and $y$, respectively. 
The post-measurement state of the sandwich-\ac{POVM} is obtained via the quantum channel with Kraus operators
\begin{equation}
\begin{aligned}
     M_{x,y} =&\sqrt{ P_x(F) Q_y P_x(F) } \\
     =& \sqrt{\frac{2^N}{\tr P_x(F)}} P_x(F) Q_y P_x(F).
\end{aligned}
\end{equation}
The last equation holds, because $P_x(F) Q_y P_x(F)$ is rank one. 
$\mathcal{N}$ undoes measurements of $\mathcal{M}$ on $P_x(F)\ket{+}^{\otimes N}$ up to irrelevant phases.
The right $P_x(F)$ fixes the outcome $x$, the $Q_y$ erases any unwanted projection introduced by $\mathcal{M}$, before the left $P_x(F)$ projects back to the support of $P_x(F)$.

Even though $P_x(F)$ and $Q_y$ are easy to implement, $\mathcal{N}$ may not be experimentally realizable for sufficiently large $N$, e.g. for hundreds of variables~\cite{VanGelder2010}.
In fact we will lead the following two main  assumptions to a contradiction:
\begin{description}
    \item[Assumption (INC)] \Ac{NP} $\not\subseteq$ \ac{BQP}~\cite{aaronson2005guest}, i.e. quantum systems do not allow to efficiently solve \ac{NP}-hard problems.
    \item[Assumption (OBS)] Every formally valid measurement can be experimentally realized, including $\mathcal{N}(F)$, for a formula $F$ such that its satisfiability problem is too hard to solve (with the finite resources available to any observer).
\end{description}

\begin{figure*}%
    \centering%
    \begin{tikzpicture}%
[lineDecorate/.style={->,thick},%
  ketDecorate/.style={minimum width=2cm},%
  nodeDecorate/.style={shape=circle,inner sep=2pt,draw,thick}]%
\node (b1) at (0.5,0) [nodeDecorate] {};%
\node (b2) at (3.5,0) [nodeDecorate] {};%
\node (b3) at (6.5,0) [nodeDecorate] {};%
\node (u1) at (0,1) [ketDecorate] {$\ket{+}$};%
\node (u2) at (3,1) [ketDecorate] {$\ket{\zeta(C_1)}$};%
\node (u3) at (6,1) [ketDecorate] {$\ket{\zeta(C_1\land C_2)}$};%
\node (u4) at (8,1) [] {\dots};%
\node (u5) at (10,1) [] {$\ket{\zeta(F)}$};%
\draw[lineDecorate] (u1) edge["$\Pi(C_1)$"] (u2);%
\draw[lineDecorate] (u2) edge["$\Pi(C_2)$"] (u3);%
\draw[lineDecorate] (u3) edge["$\Pi(C_3)$"] (u4);%
\draw[lineDecorate] (u4) edge["$\Pi(C_M)$"] (u5);%
\draw[lineDecorate,  bend right] (u1) edge node [midway, left] {$\Pi(\lnot C_1)$} (b1);%
\draw[lineDecorate,  bend right] (b1) edge[]  node [midway, right, yshift=1ex] {$\mathcal{N}(\true)$} (u1.300);%
\draw[lineDecorate,  bend right] (u2) edge node [midway, left] {$\Pi(\lnot C_2)$} (b2);%
\draw[lineDecorate,  bend right] (b2) edge[]  node [midway, right, yshift=1ex] {$\mathcal{N}(C_1)$} (u2.300);%
\draw[lineDecorate,  bend right] (u3) edge node [midway, left] {$\Pi(\lnot C_3)$} (b3);%
\draw[lineDecorate,  bend right] (b3) edge[]  node [midway, right] {$\mathcal{N}(C_1\land C_2)$} (u3.300);%
\end{tikzpicture}%
    \caption{The main idea of the hypothetical algorithm: Sequential measurements and reversing unwanted outcomes using a measurement $\mathcal{N}$ steers the state towards the satisfying assignments $\zeta(F)$ of a \ac{CNF} formula $F$ with clauses $C_m$ -- an \ac{NP}-hard problem. 
    ``Dilution'', see text, ensures the $\Pi(C_i)$ outcome is more likely than $\Pi(\lnot C_i)$.}%
    \label{fig:flowchart}%
\end{figure*}%
We exploit assumption (OBS) and the described undoing of the measurement $\mathcal{M}$ in the following algorithm to find a solution of $F$. 
We want to transform $\ket{+}^{\otimes N}$ into a solution of $F$ by sequentially measuring whether the state satisfies clause $C_m$, always undoing the unwanted answer `no'.
See \cref{fig:flowchart} for an overview of the main idea.
However, there is a technical difficulty that requires some additional effort:
The success probability of any of these steps might be prohibitively low at any $m$, see Appendix~\ref{sec:examplebadordering}.
To circumvent this, we ``dilute'' clauses to weaken their effect on the solution space, increasing the success probability for measuring the $\Pi(C_i)$ outcome.
By this we mean that we replace the formula $F$ by a `diluted' formula $\mathcal{D}(F)$ with $M'=6M$ clauses on $N'=N+M$ variables.
The $M$ new variables $v_{N+i}$ with $i=1,2,\dots,M$ encode whether the corresponding clauses $C_{i}$ are satisfied, i.e. $C_{i}\leftrightarrow v_{N+i}$.
The conversion into \ac{CNF} yields the first $4 M$ clauses.
The main trick, inspired by \emph{XORSample}~\cite{Gomes2006}, is to then add $M$ random \ac{XOR} clauses, that only check the parity of the number of satisfied clauses.
This roughly removes half of the non-solution assignments per random \ac{XOR} clause.
Finally we add $M$ clauses to ensure that all new variables are \true, such that $\mathcal{D}(F)$ is equivalent to $F$.
We give an explicit construction and prove the following properties in Appendix~\ref{app:dilution}:
\begin{enumerate}[nosep]
    \item[(D1)] $\mathcal{D}(F)$ is satisfiable if and only if $F$ is satisfiable. In this case, the solutions of $F$ and $\mathcal{D}(F)$ are identical on the first $N$ variables.
    \item[(D2)] If $F$ is satisfiable, then the probability to measure $\Pi(C_i')$ for each $C_i'\in\mathcal{D}(F)$ is greater than $\frac{1}{2}$ on average.
\end{enumerate}
The steps of this hypothetical quantum algorithm are shown in \cref{alg:solveFusingN}. 
A detailed example is given in Appendix~\ref{sec:example}.
We remark that the algorithm cannot be simulated efficiently, e.g. the required size of a classical memory to store the quantum state scales exponentially in the number of variables.
\begin{figure}
\begin{algorithm}[H]
\caption{Solve $F$ via $\mathcal{D}(F)$ using $\mathcal{M}$ and $\mathcal{N}$.}
\label{alg:solveFusingN}
\begin{enumerate}
    \item Initialize $N+M$ qubits in $\ket{+}^{\otimes (N+M)}$.  
    \item Initialize the established formula $e$ with \true{} (empty formula). From now on $e$ will track the clauses satisfied by the current state and protected by $\mathcal{N}(e)$.
    \item For every clause $C_i'$ in $\mathcal{D}(F)$:
    \begin{enumerate}
        \item Measure $\mathcal{M}(C_i')$. 
        \item If the outcome is $\Pi(C_i')$, then set $e=e\land C_i'$. 
        \item If the outcome is $\Pi(\lnot C_i')$, then undo the unwanted projection by measuring $\mathcal{N}(e)$. 
        \item Try a fixed number of times $R$ to obtain the outcome $\Pi(C_i')$. Redraw $C_i'$ if it is a random \ac{XOR} clause.
        \item Return ``unsatisfiable'' if all $R$ attempts failed. This is correct with probability at least $1-\frac{1}{2^R}$.
    \end{enumerate}
    \item Measure in the computational basis. The first $N$ qubits yield the solution $v^*$ that satisfies $F$.
\end{enumerate}
\end{algorithm}
\end{figure}

\cref{alg:solveFusingN} would solve an \ac{NP}-hard problem in polynomial time. 
There are good reasons to believe this to be physically impossible; indeed, Aaronson has proposed that the impossibility to implement such a process might become a restriction on physical possibility with a similar status as the second law of thermodynamics, and prove a useful guide in the search for new theories \cite{aaronson2005guest}. 
Thus, we propose that this implies that assumption (OBS) fails to hold. 
The implication is severe:
There is a problem size for which the measurement of $\mathcal{N}(F)$ cannot be carried out with the finite resources of any observer. 
Thus, for this problem size, $\mathcal{N}(F)$ is a simple example of a \ac{POVM} that has no counter-part in any physical system, 
even for comparatively modest system sizes.
Consequently, there are mathematically valid quantum measurements that are not physically realizable.
Hence, the associated quantities are unobservables in the above sense.
While one might arrive at this conclusion via different routes, for present purposes any unobservable is related to a computationally hard problem that scales superpolynomially in the input size and the given input size is large enough such that any observer does not have enough resources to measure it. 
Note that this conclusion is independent of the concrete physical realization of any experiment.

We only considered ideal operations, i.e. we did not prove that measurements arbitrarily close to $\mathcal{N}$ are also unobservable.
However, any measurement that is close to $\mathcal{N}$ will still allow to solve the same 3SAT problem, so the same reasoning applies to approximations of $\mathcal{N}$ as well.
We sketch this reasoning in Appendix~\ref{app:robustness}, by proving that any realization of a quantum algorithm that is $\frac{\epsilon}{L}$-close in diamond norm to the ideal implementation, where $L$ is the number of operations, gives outcome probabilities that are $\epsilon$-close to the ideal one.

In the remainder we give a simple example and look at interesting consequences of unobservables:
Unitaries that do not correspond to physical time evolutions and macroscopic but finite states whose coherence is unobservable.

We established that the sandwich-\ac{POVM} $\mathcal{N}(f)$ is unobservable. 
This unobservable \ac{POVM} is linked to unobservable \acp{PVM} via Naimark's dilation theorem. 
The isometry~\cite[p. 285]{alma990007249620206446}
\begin{equation}
    V = \sum_{x=0}^{2^N-1}\sum_{y=0}^{2^M-1} \ket{x, y} \bra{y}H^{\otimes N} P_x(F),
\end{equation}
fulfills
\begin{equation}
    V^\dagger\proj{x,y} V = P_x(F) Q_y P_x(F).
\end{equation}
One can extend $V$ to a unitary $U$, such that the \ac{POVM} is equivalent to applying $U$ followed by a canonical basis measurement.
We define the corresponding Hermitian operator
\begin{equation}
    \tilde{Z} = U^\dagger Z U \label{eq:Ztilde}
\end{equation}
where
\begin{equation}
    Z = \sum_{z=0}^{2^{M N}-1} \omega^z \proj{z} \text{ with } \omega = \mathrm{e}^{2\pi \mathrm{i}\cdot 2^{-(M+N)}}
\end{equation}
is the usual $2^{M+N}$-dimensional Pauli-$Z$ operator. Note that $Z$ is observable but $\tilde{Z}$ is an unobservable.
Likewise, the operators 
\begin{equation}
    \tilde{X}:=H \tilde{Z} H^\dagger,
    \text{ and }\tilde{Y}:=\mathrm{i} H  \tilde{Z},
\end{equation}
where $H$ is the Fourier transformation, are unobservables. 
Analogously to \cref{eq:Ztilde} we can define $\tilde{X}$ and $\tilde{Y}$.
Then $\tilde{X}$, $\tilde{Y}$, and $\tilde{Z}$ are Pauli operators on a $2^{M+N}$-dimensional qudit.
Therefore, there is a qudit space on which all Pauli-operators are unobservables.
Note however, that the canonical qudit shares the same state space and the corresponding Pauli operators are observable. 

The unitary $U$ in \cref{eq:Ztilde} maps the unobservable $\tilde{Z}$ to the observable $Z$.
Thus $U$, as any unitary that transforms an unobservable into an observable, cannot be realized as a physical time evolution. This recapitulates the result of~\cite{aaronson2020hardness}, without relying on a specific quantifier of complexity, such as the quantum circuit complexity employed there.
Let $\mathcal{T}$ be the set of physically realized time evolutions using only the finite resources (time, energy, space) available to the observer. 
$\mathcal{T}$ does not contain all unitary operators, as illustrated by $U$. 
This motivates us to define the \emph{time-evolution orbit} of a mixed or pure state as 
\begin{equation}
\begin{aligned}
    \mathcal{T}(H):=&\{U H U^\dagger \;|\; U\in \mathcal{T}\} \\
    \text{ and } \mathcal{T}(\ket{\psi}) =& \{U \ket{\psi} \;|\; U\in \mathcal{T}\},
\end{aligned}
\end{equation}
respectively. 

We associate states
\begin{equation}
\begin{aligned}
    &\rho = \sum_{z=0}^{2^{M+N}-1} p_z \proj{z} \text{ and } \tilde{\rho} = U \rho U^\dagger\\
\text{with }&\sum_{z=0}^{2^{M+N}-1} p_z = 1 \text{ and } p_i\neq p_j \;\forall i \neq j,
\end{aligned}
\end{equation}
with the observable $Z$ and the unobservable $\tilde{Z}$, respectively. Any unitary $U'$ with $U'\rho U'^\dagger = \tilde{\rho}$ maps the observable $Z$ onto the unobservable $\tilde{Z}$ and is therefore not a physically realizable time evolution. 
Indeed such $U'$ is equal to $U$ up to a unitary which is diagonal in the computational basis. 


The complexity to evolve $\rho$ into $\tilde{\rho}$ is smaller or equal to the minimum complexity to evolve any purification of $\rho$ into any purification of $\tilde{\rho}$, as the latter accomplishes the former as well. 
This implies that there are also two pure states $\ket{\psi_1}$ and $\ket{\psi_2}$ which are not connected by any unitary time evolution in $\mathcal{T}$, i.e. $\ket{\psi_2}\not\in \mathcal{T}(\ket{\psi_1})$.
By definition of $\mathcal{T}$, there is no time-efficient way to map $\ket{\psi_1}$ to $\ket{\psi_2}$, i.e. any unitary time evolution $W(t)$ with $W(t)\ket{\psi_1} = \ket{\psi_2}$ requires exponential time $t=\exp(\Omega(N))$. 
Again we think of an $N$ which is large enough such that $t$ is prohibitively large for the finite resources of any observer.
The slow evolution implies a weak coupling and the transition probability $|\bra{\psi_2}W(t_0)\ket{\psi_1}|^2=\exp(-\Omega(N))$ is small for sub-exponential time $t_0$.
Hence, any unitary $V$ in $\mathcal{T}$ fulfills 
\begin{equation}
    \left|\bra{\psi_1} V \ket{\psi_2}\right| = \exp(-\Omega(N)).\quad \forall V\in\mathcal{T} \label{eq:unitaryinT}
\end{equation}
Now, employing the von Neumann model, any measurement is implemented as a unitary interaction on the system and the measurement apparatus followed by a measurement of the state of the apparatus~\cite{VonNeumann1932}.
For a unitary measurement operator $A=\sum_{i=1}^d \lambda_i A_i$ which associates labels $\lambda_i\in\mathds{C}$ to $d$ projectors $A_i$, the corresponding interaction can be written as
\begin{equation}
    C_A = \sum_{i=1}^d \proj{i} \otimes \lambda_i A_i
\end{equation}
and the measurement apparatus is initialised in the $\ket{+}:=\frac{1}{\sqrt{d}}\sum_i \ket{i}$ state and finally measured in the Fourier basis.
The case of non-unitary measurement operators is handled by relabeling the measurement outcomes.
Note that $A$ is not necessarily Hermitian, but it is a normal operator and the spectral theorem holds by construction.
For an observable $A$ we calculate
\begin{equation}
\begin{aligned}
\left|\bra{\psi_1} A \ket{\psi_2}\right| \overset{\hphantom{\text{\cref{eq:unitaryinT}}}}{=}& d^2 \left|\bra{+}\bra{\psi_1} C_A \ket{+}\ket{\psi_2}\right|\\
\overset{\text{\cref{eq:unitaryinT}}}{=}& \exp(-\Omega(N)). \label{eq:unitaryofmeasurement}
\end{aligned}
\end{equation}
We remark that for $A=\mathds{1}$ this implies that the two states $\ket{\psi_1}$ and $\ket{\psi_2}$ are practically orthogonal.
\cref{eq:unitaryofmeasurement} implies that the expectation values of $A$ in the states
\begin{equation}
    \begin{aligned}
        \rho_{\mathrm{coh}} =& \frac{1}{2}(\ket{\psi_1}+\ket{\psi_2})(\bra{\psi_1}+\bra{\psi_2})\\
        \text{and }\rho_{\mathrm{incoh}} =& \frac{1}{2}(\ket{\psi_1}\bra{\psi_2}+\ket{\psi_2}\bra{\psi_2}),
    \end{aligned}
\end{equation}
are indistinguishable, as
\begin{equation}
\begin{aligned}
    &\tr (A \rho_{\mathrm{coh}})-\tr (A \rho_{\mathrm{incoh}})\\
    =&2\mathrm{Re}(\bra{\psi_1}A\ket{\psi_2})
    \overset{\cref{eq:unitaryofmeasurement}}{=} \exp(-\Omega(n)).
\end{aligned}
\end{equation}
The coherent superposition of two states which are not connected by time-evolution cannot be distinguished from the incoherent mixture of the two.

The coherence of the state $\rho_{\mathrm{coh}}$ cannot be observed and in this sense we arrived at the concept of decoherence. This vanishing of interference between different (sets of) states is a hallmark of the presence of a \ac{SSR}. \ac{SSR}s were introduced in 1952 by Wick, Wightman, and Wigner~\cite{wick1952intrinsic} to account for the fact that certain quantities, like charge or spin, seem exempt from the superposition principle. The inability to detect interference between the states $\ket{\psi_1}$ and $\ket{\psi_2}$ entails that any admissible observable $O$ must be diagonal in the basis spanned by these states. 

Two relevant consequences of the existence of an \ac{SSR} are, first, that a formally pure superposition corresponds to a mixture, as shown above, and second, that this decomposition is unique. This stands in contrast to ordinary mixed states, which do not have a unique extremal decomposition in general. In particular, this suggests that it is permissible to apply an ignorance interpretation to such a mixture, interpreting the coefficients of its convex decomposition as probabilities of finding the system in a given state. On this basis, \ac{SSR}s have been proposed to deliver a `wash-out' solution to the measurement problem, where interference vanishes to leave only classical probabilities (e.g. see~\cite{hepp1972quantum, bub1988micro, bub1988solve, araki1986continuous, landsman1995observation, svozil2026unitarity}).

It remains to be seen whether the `effective' \ac{SSR} introduced here can fulfill this role. Of note, this is not the first time that limitations of the observer have been appealed to with regard to \acp{SSR}: Landsman, e.g., appeals to the localized nature of the observer to justify the impossibility of detecting interference between sufficiently separated states~\cite{landsman1995observation}, while the program of \textit{einselection} (environmentally induced superselection) seeks to similarly define effective or \ac{FAPP} superselection rules based on environmental decoherence effects~\cite{zurek2003decoherence}.
Along a different route, Peres has argued that the complexity of performing certain measurements is the reason for the appearance of irreversibility~\cite{peres1980can}.
However, to the best of our knowledge, this is the first time that computational complexity has been investigated as a source of superselection.

Let us summarize the chain of arguments that have brought us here.
We started out by showing that the existence of computationally hard problems implies that there are measurements which cannot be performed.
More concretely, we showed that the sandwich-POVM $\mathcal{N}$ would allow us to efficiently solve 3SAT problems.
Thus, while these are perfectly valid measurements in the standard formalism of quantum mechanics, they cannot correspond to observables: rather, they represent unobservables.
For example, sufficiently large Hilbert spaces contain qudits where Pauli matrices correspond to unobservables.
Next we discussed the following consequences. 
There are unitary operators which are not physically realized as time evolutions in any system.
And there are even states which are not connected by any physical time evolution, leading to the concept of time-evolution orbitals.
Finally, we saw that no coherence can be observed between two states if one cannot time-evolve one into the other.
As discussed above this implies that we can switch from a quantum description of the state of a system to classical probabilities and in this sense classicality emerges.
In this paper we only showed the existence of this mechanism. 
We do not expect this mechanism to be the only reason for a classical world,
and a future quantitative analysis of unobservables may estimate how large its contribution is:
Are many quantum phenomena not visible in macroscopic systems, because observing them is too hard in a complexity theoretic sense?

\begin{acknowledgments}
This work was funded by the Quantum Computing Initiative of DLR via projects ALQU and R-QIP.
\end{acknowledgments}
%

\clearpage
\appendix
\onecolumngrid
\section{Undoing a Measurement with a Four-Qubit Code}
\label{sec:fourqubitcode}

In the following we describe a code that can undo a measurement with the smallest number of qubits possible.
The code $q_4$ is a \ac{CSS} code~\cite{CalderbankShor96, Steane1996} that encodes two logical qubits into four physical ones, with a code distance of two. Hence, in the usual notation, it is a [[4,2,2]] code. 
This code can detect a single-qubit error. If the position of the error is known, then it allows to correct it.
Hence it can correct a single erasure.
The stabilizer is generated by the operators
\begin{equation}
    \begin{aligned}
        g_1 =& X_1 X_2 X_3 X_4\\
        g_2 =& Z_1 Z_2 Z_3 Z_4
    \end{aligned}
\end{equation}
and we choose the logical operators
\begin{equation}
    \begin{aligned}
        \overline{X}_1 =& X_2 X_3,\\
        \overline{X}_2 =& X_2 X_4,\\
        \overline{Z}_1 =& Z_2 Z_4,\\
        \text{and } \overline{Z}_2 =& Z_2 Z_3.
    \end{aligned}
\end{equation}
The logical basis states are
\begin{equation}
    \begin{aligned}
        \ket{\overline{00}} =& \frac{1}{\sqrt{2}}\left(\ket{0000}+\ket{1111}\right),\\
        \ket{\overline{10}} =& \frac{1}{\sqrt{2}}\left(\ket{0110}+\ket{1001}\right),\\
        \ket{\overline{01}} =& \frac{1}{\sqrt{2}}\left(\ket{0101}+\ket{1010}\right),\\
        \text{and }\ket{\overline{11}} =& \frac{1}{\sqrt{2}}\left(\ket{0011}+\ket{1100}\right).
    \end{aligned}
\end{equation}
Performing a measurement on the first qubit (and ignoring the outcome) is equivalent to replacing it by a completely mixed state.
Mathematically, for any code word $\ket{\overline{\psi}}$
\begin{equation}
    \frac{\mathds{1}}{2} \otimes \tr_1 \proj{\overline{\psi}} = \frac{1}{4}\sum_{e\in \{\mathds{1},X_1,Y_1,Z_1\}} e\proj{\overline{\psi}}e^\dagger,
\end{equation}
which we interpret as a single qubit error $\mathds{1}$, $X$, $Y$, or $Z$ occurring with probability $\frac{1}{4}$ each.
This error can be corrected by performing a syndrome measurement,
and correcting it by applying an operation that anti-commutes with the corresponding stabilizer generators, see \cref{tab:422corrections}.
\begin{table}[tbhp]
    \caption{Possible choices of correction operations depending on the syndrome.}%
    \label{tab:422corrections}%
    \centering%
    \begin{tabular}{c|c}%
        Syndrome & Correction \\%
        \hline%
        00 & $\mathds{1}$\\%
        01 & $X_1$\\%
        10 & $Z_1$\\%
        11 & $Y_1$%
    \end{tabular}%
\end{table}%
Let us take a closer look at how this works.
Assume we started out in the state 
\begin{equation}
    \ket{\overline{\psi}} = \alpha \ket{\overline{00}} + \beta \ket{\overline{10}} + \gamma \ket{\overline{01}} + \delta \ket{\overline{11}}
\end{equation}
and measured $\ket{0}$ on the first qubit.
The state collapsed into
\begin{equation}
    \begin{aligned}
        \ket{\psi'} =& \proj{0}_1 \ket{\overline{\psi}}\\
            \propto& \ket{0} \left(\alpha \ket{000} + \beta\ket{110}+\gamma \ket{101} + \delta\ket{011}\right).
    \end{aligned}
\end{equation}
We now do a syndrome measurement, where the outcome of $g_2$ is always $+1$, however, the outcome of $g_1$ is random.
We consider both cases:
\begin{enumerate}[nosep]
    \item The syndrome is $00$, so we apply no correction. The state is $\frac{1}{2}(\mathds{1}+g_1) \ket{\psi'} \propto \ket{\overline{\psi}}$.
    \item The syndrome is $10$, so we apply $Z_1$ and the state is
    $Z_1 \frac{1}{2}(\mathds{1}-g_1) \ket{\psi'} \propto \ket{\overline{\psi}}$.
\end{enumerate}
Either way, we restored the original state that described the system prior to the collapse. 
A similar analysis can be done for other measurement bases.
It is noteworthy that the entanglement is restored by the measurement, not by the correction, which only affects a single qubit.

As the measurement is reversible, the ``collapse'' does not erase any information.
It can therefore also be achieved with a unitary operation, which is another way of seeing that it can be undone, see \cref{fig:revmeasurementequalsunitary}.
\begin{figure}[thbp]%
\centering%
\begin{quantikz}%
\lstick[4]{\ket{\overline{\psi}}} & \meter{}\gategroup[4,style={inner
sep=0pt,dashed}]{\ket{0}} & \qw\\%
 & & \\%
 & & \\%
 & & %
 \end{quantikz}%
\quad =\quad%
\begin{quantikz}%
\lstick{\ket{0}} & & & \\%
\lstick{\ket{0}}& \targ{}& \targ{} & \\%
\lstick[2]{\ket{\psi}} & \ctrl{-1}& & \\%
& & \ctrl{-2} &%
\end{quantikz}%
\caption{A unitary quantum circuit that implements the same operation as the reversible measurement, where only the $\ket{0}$-outcome is accepted.}\label{fig:revmeasurementequalsunitary}%
\end{figure}%
The reversible measurement could also be replaced by a measurement with post-selection.
However, the success probability then scales differently with the number of measurements.


\section{Detailed description of the algorithm}
\label{sec:algorithm}
Consider a propositional logic formula in \ac{CNF} with $N$ variables $v_1$, $v_2$, \dots, $v_N$ and $M$ clauses of exactly three different literals each.
We denote positive literals with $+v_i$ and negative literals (i.e. $\lnot v_i$) with $-v_i$. 
The formula can be expressed as
\begin{alignat}{2}
    F =& \bigwedge_{m=1}^M C_m,\\
\text{with clauses }
   C_m =& \bigvee_{j=1}^3  l_{mj}\quad m=1,2,...,M
\end{alignat}
and literals $l_{mj} = \sigma_{mj} v_{q_{mj}}$, where $\sigma_{mj}=\pm 1$ and $q_{mj}\in\{1,2,\dots,N\}$.
The task is to determine whether there is an assignment of truth values to the variables $v_1$, $v_2$, ..., $v_N$, such that $F$ evaluates to \true. 
If such an assignment exists, we call $F$ satisfiable.
The algorithm described below actually also returns an assignment of truth values to the variables if $F$ is satisfiable, but we will not need it.

We associate a qubit with each of the $N$ variables, such that $\ket{0}_i$ and $\ket{1}_i$ on qubit $i$ correspond to the values \false{} and \true{} of the variable $v_i$, respectively.
We denote the corresponding state space $\mathcal{H}_{2^N}$.

In the following we will define 
different mathematically valid measurements related to the given formula $F$ as defined above.

First we associate a projector $\Pi(F)$ with an arbitrary propositional logic formula $F$ on (a subset of) the $N$ variables such that it projects onto the space of fulfilling assignments of $F$.
We can write 
\begin{equation}
    \Pi(F) = \sum_{\substack{v_1,v_2,...,v_n=0\\F(v_1,v_2,...,v_N)=\true}}^1 \proj{v_1}\otimes \proj{v_2}\otimes ... \otimes \proj{v_N}.
\end{equation}
The following special cases will be useful in constructing the algorithm.
The projectors
\begin{equation}
\begin{aligned}
\Pi(C_m) =& \mathds{1}- \Pi(\lnot C_m)\\
    \text{and }\Pi(\lnot {C}_{m}) =& \prod_{j=1}^3 \Pi(\lnot l_{mj})
\end{aligned}
\end{equation}
with
\begin{equation}
    \Pi(\lnot l_{mj}) = \frac{\mathds{1} + \sigma_{mj} Z_{q_{m,j}}}{2}
\end{equation}
project onto the subspace of states fulfilling and violating the $m$-th clause in $F$, respectively. 
We denote the dichotomic \ac{PVM} formed by these projectors by
\begin{equation}
    \mathcal{M}(C_m) = \{\Pi(C_m),\; \Pi(\lnot C_m)\}. \label{eq:defmeasurementappendix}
\end{equation}
So $\mathcal{M}(C_m)$ answers the question whether the state of the $N$ qubits fulfills or violates the $m$-th clause of $F$.
As $\Pi(C_m)$ only involves three qubits, the measurement $\mathcal{M}(C_m)$ is easy to implement independently of $N$ and $M$, see \cref{fig:measurementforclause}.

If we would measure the `yes'-outcome for all clauses, we would know for sure that the formula $F$ would be satisfiable.
However, to tackle the undesired `no'-outcome, we introduce further measurements.

First we introduce a measurement that will be used below to ensure that the probabability of the `no' outcome will be not larger than $\frac{1}{2}$.
It is defined analogously to $\mathcal{M}(C_m)$ in \cref{eq:defmeasurementappendix} as
\begin{equation}
\mathcal{M}\left(\ubigvee_{i\in S} v_i\right) \label{eq:xormeasurement}   
\end{equation}
for some $S\subset \{1,2,..,N\}$, except that we use the \ac{XOR} in contrast to the ``normal or'' that appears in the clauses of $F$.
Also this measurement can be implemented efficiently, see \cref{fig:measurementforxorclause} for an example.

The sequential measurement of $\mathcal{M}$ for all clauses in $F$ with outcomes $x=(x_1, x_2, ..., x_m)$ implements the projection
\begin{equation}
    P_x(F) = \prod_{i} \left\{\begin{array}{cc}
    \Pi(C_i) & \text{ if $x_i=0$}\\
    \Pi(\lnot C_i) & \text{ else.}
    \end{array}\right.
\end{equation}
We can also interpret $x_i$ as the $i$-th digit of an $m$-digit binary number $x=0,1,2,...,2^m-1$.
Note that also this measurement can be implemented efficiently.
The last measurement we need is a simple measurement in the $X$-basis on every qubit with projectors
\begin{equation}
    Q_y = \bigotimes_{i=1}^N \frac{\mathds{1}+(-1)^{y_i} X}{2},
\end{equation}
where $y_i$ is the outcome on qubit $i$. Again we can interpret $y_i$ as the $i$-th digit of $y=0,1,2,...,2^N-1$
Note that $P_x(F)$ and $Q_y$ do not commute.

Now we gathered everything to finally define the 'sandwich'-measurement for a (\ac{CNF}-\ac{XOR}) formula $F=\bigwedge_{i=1}^M C_i$,
\begin{equation}
    \mathcal{N}(F) = \{P_x(F) Q_y P_x(F)| 0\leq x \leq 2^M-1, 0\leq y\leq 2^N-1\}, \label{eq:sandwichmeasurementappendix}
\end{equation}
where we label the outcomes with a multi-index to match the above definitions.
We verify that $\mathcal{N}(f)$ is a \ac{POVM}, hence a mathematically valid quantum measurement. As the $P_x(f)$ and the $Q_y$ are projectors, the elements of $\mathcal{N}(f)$ are positive semidefinite. Also the completeness condition
\begin{equation}
    \sum_{x,y} P_x(F) Q_y P_x(F)  = \sum_x P_x(F) = \mathds{1}
\end{equation}
holds. 
Even if $P_x(F)$ and $Q_y$ are easy to implement, and despite the compact form of \cref{eq:sandwichmeasurementappendix}, $\mathcal{N}$ may not be implementable for large $N$.
From the point of view of quantum computing this would not be surprising, it requires exponentially (in $N$) many gates to implement general operations on $N$ qubits in a quantum circuit.
But what if it so happens that $\mathcal{N}$ is realized natively in some quantum system?
We therefore seek a stronger argument than circuit complexity itself.
Indeed we will lead the following main assumption to a contradiction with the natural expectation that \ac{NP}-hard problems should not be `easily' solvable:
\begin{description}
\item[Assumption (OBS)] $\mathcal{N}(F)$ is an observable for a formula $F$ whose satisfiability problem is too hard to solve (with the finite resources available to any observer).
\end{description}

If for now we assume that (OBS) holds, then
we can use $\mathcal{N}$ to undo measurements of $\mathcal{M}$ in the following sense.
Let 
\begin{equation}
 \zeta(F)=\{v | F(v_1,v_2,...,v_N)=\true\}
\end{equation}
denote the set of solutions of the propositional formula $F$ and
\begin{equation}
    \chi(F) := |\zeta(F)|
\end{equation}
denote the number of solutions.
Now let $e$ and $F= e \land c$ be propositional formulas, which implies  $\zeta(F)\subset\zeta(e)$ and
\begin{equation}
    \Pi(F)\Pi(e) = \Pi(f).
\end{equation}
Furthermore we define the superposition of those satisfying assignments as
\begin{equation}
    \ket{\zeta(F)} = \sum_{v\in\zeta(F)} \ket{v_1,v_2,...,v_N}.
\end{equation}
Consider the scenario where we first measured $\mathcal{M}(e)$ on $\ket{+}$ with `yes'-outcome and then we measured $\mathcal{M}(c)$ with `no'-outcome, so the state is
\begin{equation}
\begin{aligned}
    \ket{\psi_f} =& \Pi(\lnot c)\Pi(e) \ket{+}\\
    \propto& \Pi(\lnot c) \ket{\zeta(e)}\\
    \propto& \ket{\zeta(e \land \lnot c)}.
\end{aligned}
\end{equation}
The measurement $\mathcal{N}(e)$ undoes the effect of $\Pi(\lnot c)$. 
In this context, the right $P_x(e)$ fixes the outcome $x=0$, $Q_y$ redistributes the probability amplitudes, while the left $P_x(e)$ projects back to the space of solutions of $e$.
The post-measurement state is
\begin{alignat}{2}
&P_0(e) Q_0 P_0(e) \Pi(\lnot c) \ket{\zeta(e)}\\
=&P_0(e) Q_0 \Pi(\lnot c)  P_0(e) \ket{\zeta(e)}\\
=&P_0(e) Q_0 \Pi(\lnot c)\ket{\zeta(e)}\\
\propto&P_0(e) Q_0 \ket{\zeta(e\land \lnot c)}\\
=&P_0(e) \proj{+} \ket{\zeta(e\land \lnot c)}\\
\propto&P_0(e) \ket{+}\\
\propto&\ket{\zeta(e)}.
\end{alignat}
So indeed the projection corresponding to the unwanted measurement outcome has been removed from the state.
Other values of $y$ lead to irrelevant sign flips in this state. 
Indeed as the phases are irrelevant in our hypothetical algorithm one could even work with classical mixed states, further indicating that the proposed algorithm is not illustrating a quantum advantage but a weakness in the formalism, at least when applied naively.
However, no other values for $x$ are measured: The probability for the previously obtained ${x'}=(0,0,\dots,0)$ is $1$, because
\begin{alignat}{2}
    &\sum_y\tr\left( P_{0}(e) Q_y P_{0}(e) \proj{\zeta(e\land\lnot c)}\right)\\
    =&\tr\left( P_{0}(e) \left(\sum_y Q_y\right) P_{0}(e) \proj{\zeta(e\land\lnot c)}\right)\\
    =&\tr\left( P_{0}(e)^2 \proj{\zeta(e\land\lnot c)}\right)\\
    =&\tr\left( P_{0}(e) \proj{\zeta(e\land\lnot c)}\right)\\
    =&\tr\left(P_{0}(e)   \frac{P_{0}(e\land\lnot c)\proj{+} P_0(e\land\lnot c)}{\tr P_0(e\land\lnot c)\proj{+} P_0(e\land\lnot c)}\right)\\
    =&\tr\left(\frac{P_0(e\land\lnot c)\proj{+} P_0(e\land\lnot c)}{\tr P_0(e\land\lnot c)\proj{+} P_0(e\land\lnot c)}\right) = 1.
\end{alignat}
Thus the already established projections $\Pi(c_i)$ with $i<\mu$ can be protected, while $Q_y$ erases the projection $\Pi(\lnot c_\mu)$.

We exploit assumption (OBS) and the described undoing of the measurement $\mathcal{M}$ in the following algorithm to find a solution of $F$. See also \cref{fig:flowchart} for an overview of the main idea.

However, there is a technical difficulty that requires some additional effort:
We cannot guarantee that the success probability of the measurement $\Pi(C_i)$ for a clause $C_i$ in the state $\ket{\zeta(e)}$,
\begin{equation}
    \tr(\Pi(C_i)\proj{\zeta(e)}) = \frac{\chi(e\land C_i)}{\chi(e)},
\end{equation}
is high enough in the sense that its inverse, the expected number of trials,  scales polynomially in $N$. 
Indeed it is easy to construct a counter example, see Appendix~\ref{sec:examplebadordering}.
To circumvent this, we ``dilute'' clauses to weaken their effect on the solution space, increasing the success probability for measuring the $\Pi(C_i)$ outcome.
By this we mean that we replace the formula $F$ by a randomly chosen 'diluted' formula $\mathcal{D}(F)$ with $M'=6M$ clauses on $N'=N+M$ variables.
More  details on the dilution are given in Appendix~\ref{app:dilution}.

The distinction between satisfiable and unsatisfiable instances is achieved as follows.
For satisfiable formulas $F$, the success probability of the \ac{XOR}-clauses increases until it reaches $1$.
In contrast, if $F$ is not satisfiable, then the success probability decreases to zero.
The two scenarios can be distinguished with high probability in polynomial time.
We give a detailed example of the algorithm in Appendix~\ref{sec:example}.

\section{Hard Formula for the hypothetical Algorithm without Dilution}
\label{sec:examplebadordering}
The formula with $N$ variables and $N$ clauses
\begin{equation}
\bigwedge_{i=2}^N (\lnot v_1 \lor v_i) \land v_1
\end{equation}
illustrates why dilution or a similar technique is crucial. 
The measurement probabilites are
\begin{equation}
    p_m = \left\{\begin{array}{cc}
        \frac{2^m + 1}{2^m + 2} & \text{ if }m<N \\
    \frac{1}{2^N + 1} & \text{ if }m=N. 
    \end{array}\right.
\end{equation}
The success probability of the last measurement scales exponentially bad in $N$. The first $N-1$ clauses create a large inbalance between  $v_1=\true$ and $v_1=\false$ with few and many terms in the superposition, respectively. 
 
The formula can be adapted to have three unique literals per clause, e.g.
\begin{equation}
\begin{aligned}
    &(v_1\lor v_2\lor \neg v_3)
    \land (v_1\lor \neg v_2\lor v_3)
    \land (v_1\lor \neg v_2\lor \neg v_3)
    \land(\neg v_1\lor v_2\lor \neg v_3)\\
    \land&(\neg v_1\lor \neg v_2\lor v_3)
    \land(\neg v_1\lor \neg v_2\lor \neg v_3)
    \land \bigwedge_{i=4}^N (\lnot v_1 \lor v_2 \lor v_i) \land (v_1\lor v_2\lor v_3)
    \end{aligned}
\end{equation}
without changing the exponential decay in the measurement probability for the last clause.

\section{Propositional formula dilution}
\label{app:dilution}
In this Appendix we define the dilution $\mathcal{D}(F)$ of a propositional logical formula $F$ in \ac{CNF}, prove two properties and give some interpretation.
\begin{definition}[Diluted formula]
For a given propositional formula $F$ in $N$ variables with $M$-clauses in \ac{CNF}, we define the dilution $\mathcal{D}(F)$ as the distribution over all formulas of the form
\begin{equation}
    \mathcal{D}(F) = \bigwedge_{m=1}^M C_m'\land \bigwedge_{m=1}^M X_m \land \bigwedge_{m=1}^M v_{N+m},
    \label{eq:defdilution}
\end{equation}
with
\begin{alignat}{2}
   C_{m}':=&\left(\bigvee_{j=1}^3 l_{mj}\lor \lnot v_{N+m}\right)\land \bigwedge_{j=1}^3 (\lnot l_{mj} \lor v_{N+m}),\label{eq:implicationcnf}\\
\text{and } X_i:=&\ubigvee_{m\in S_i} v_{N+m} \veebar \left\{\begin{array}{cc}
        \true &\text{if } |S_i| \text{ is even} \\
        \false & \text{else}
    \end{array}\right.
\end{alignat}
obtained by randomly choosing the index sets $S_i\subseteq\{1,2,\dots,M\}$, such that each $m\in\{1,2,\dots,M\}$ has probability $\frac{1}{2}$ to be in $S_i$. 
Each formula drawn from $\mathcal{D}(F)$ has $N+M$ variables and $6 M$ clauses.
\end{definition}
Note that for us the order of the clauses in $F$ and $\mathcal{D}(F)$ matter, as they indicate the order in which the measurements are carried out.
Note that the $M$ new variables $v_{N+i}$ with $i=1,2,\dots,M$ encode whether the corresponding clauses $C_{i}$ are satisfied, as they are equivalent to
\begin{equation}
   C_{m}':= (v_{N+m} \leftrightarrow C_m).
\end{equation}
We remark that the crucial trick of the $M$ random \ac{XOR} clauses, that only check the parity of the number of clauses satisfied is inspired by \emph{XORSample}~\cite{Gomes2006}.
Every random \ac{XOR} clause removes half of the assignments that violate the original formula, as we will see in the proof of (D2) below.

We now prove the two lemmata that are referred to in the main text as properties (D1) and (D2) of the dilution.
\begin{lemma}[D1]
For each $F'$ drawn from $\mathcal{D}(F)$, $F'$ is satisfiable if and only if $F$ is satisfiable.
\end{lemma}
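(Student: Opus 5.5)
We prove both directions by relating assignments of $F'$ to assignments of $F$ through the auxiliary variables $v_{N+1},\dots,v_{N+M}$. The key observation, already noted after the definition, is that the four clauses making up $C_m'$ in \cref{eq:implicationcnf} are jointly equivalent to $v_{N+m}\leftrightarrow C_m$. The first clause $\bigvee_j l_{mj}\lor\lnot v_{N+m}$ gives $v_{N+m}\rightarrow C_m$. The three clauses $\lnot l_{mj}\lor v_{N+m}$ together give $C_m\rightarrow v_{N+m}$. The final block $\bigwedge_m v_{N+m}$ forces every auxiliary variable to be \true, so it remains to understand what the \ac{XOR} clauses do once all $v_{N+m}$ are \true.

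For ``$F'$ satisfiable $\Rightarrow$ $F$ satisfiable'', take a satisfying assignment $w$ of $F'$. The unit clauses give $w_{N+m}=\true$ for all $m$. By $v_{N+m}\rightarrow C_m$, every $C_m$ evaluates to \true{} on $(w_1,\dots,w_N)$. Hence this restriction satisfies $F$. This direction does not use the \ac{XOR} clauses at all.

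For ``$F$ satisfiable $\Rightarrow$ $F'$ satisfiable'', take a solution $v^*$ of $F$ and extend it by $v_{N+m}=\true$ for all $m$. Then each $C_m'$ holds, since both sides of $v_{N+m}\leftrightarrow C_m$ are \true, and the unit clauses hold. It remains to check every $X_i$, whatever index set $S_i$ was drawn. With all auxiliary variables \true, $\ubigvee_{m\in S_i} v_{N+m}$ equals the parity of $|S_i|$, i.e.\ \true{} if $|S_i|$ is odd and \false{} if it is even. The added constant is \true{} exactly when $|S_i|$ is even. In both cases the \ac{XOR} is $\true\veebar\false$ or $\false\veebar\true$, which is \true, so $X_i$ is satisfied for every choice of $S_i$ (including $S_i=\emptyset$). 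This also shows that every solution of $F'$ restricts to a solution of $F$, and every solution of $F$ extends uniquely (auxiliaries forced \true) to one of $F'$. That gives the identity of solutions on the first $N$ variables stated in (D1).

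The only step needing care is the parity bookkeeping for $X_i$: the constant term must be chosen so that the all-\true{} auxiliary assignment always passes. I expect this to be the main (though minor) obstacle, together with fixing the convention that the empty \ac{XOR} is \false.
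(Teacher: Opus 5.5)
Your proof is correct and follows the same route as the paper. For one direction you extend a solution of $F$ by setting all $v_{N+m}$ to \true{} and use $v_{N+m}\leftrightarrow C_m$; for the other you use the unit clauses together with $v_{N+m}\rightarrow C_m$. The only difference is that you check explicitly, via the parity of $|S_i|$, that each XOR clause $X_i$ holds under the all-\true{} auxiliary assignment, which the paper simply asserts holds ``by construction.''
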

\begin{proof}
\begin{description}
\item[$F$ satisfiable $\Rightarrow$ $F'$ satisfiable]
One can choose $v_{N+m}=\true$ and for the first $N$ variables use the satisfying assignment of $F$. Then all \ac{XOR}-Clauses are satisfied by construction. Also the $C_m'$ are fulfilled, as they encode $v_{N+m} \leftrightarrow C_m$. Thus this assignment evaluates $F'$ to \true.
\item[$F'$ is satisfiable $\Rightarrow$ $F$ is satisfiable]
There exists a solution of $F'$, and any solution of $F'$ fulfills $v_{N+m}=\true$. Because the clauses $C_m'$ encode $v_{N+m} \leftrightarrow C_m$, the assignment of values to the first $N$ variables in the solution of $F'$ also solves $F$.
\end{description}
\end{proof}
\begin{lemma}[D2]
Let $F$ be a \ac{CNF} formula with $N$ variables and $M$ clauses.
When consecutively measuring $\mathcal{M}(C_i')$ for $i=1,2,..,6 M$ and $C_i'$ the clauses of $F'=\mathcal{D}(F)$ the probability to measure $\Pi(C_i')$ is strictly greater than $\frac{1}{2}$ if and only if $F'$ is satisfiable.
\end{lemma}
\begin{proof}
We look at the three types of clauses in $\mathcal{D}(F)$ separately.
The four clauses in \cref{eq:implicationcnf} contain four different variables $v_{q_{m1}}$, $v_{q_{m2}}$, $v_{q_{m3}}$, and $v_{n+m}$, such that there are $2^4=16$ possible assignments to them.
They are listed in \cref{tab:excludedassignmentsfirstclauses}, where we also indicate which clause excludes them from the solution set, if any. As can be seen, the clauses I to IV exclude one, four, two, and one assignments, respectively.
Thus the success probabilities of the measurement $\mathcal{M}$ associated with these four clauses are
$\frac{15}{16}$, $\frac{11}{15}$, $\frac{9}{11}$, and $\frac{8}{9}$.
Note that the product of these probabilities is $\frac{1}{2}$, i.e. together they exclude half of the solutions, namely where $v_{N+m}\neq C_m$.
In terms of the original variables $v_{q_{m1}}$, $v_{q_{m2}}$, and $v_{q_{m3}}$, all assignments still appear with the same relative weight, such that the next set of four clauses ($m\rightarrow m+1$) have the same probabilities.
\begin{table}[th]
    \centering
    \begin{tabular}{c|c|c|c|c}
        $l_{m1}$ & $l_{m1}$ & $l_{m1}$ & $v_{N+m}$ & Excluded by clause\\
        \hline
        \cellcolor{mygreen} 0 &\cellcolor{mygreen} 0 &\cellcolor{mygreen} 0 & 0 & --\\
        0 & 0 & 0 & 1 & I\\
        0 & 0 & 1 & 0 & IV\\
        \cellcolor{mygreen}0 & \cellcolor{mygreen}0 & \cellcolor{mygreen}1 & 1 & --\\
        0 & 1 & 0 & 0 & III\\
        \cellcolor{mygreen}0 & \cellcolor{mygreen}1 & \cellcolor{mygreen}0 & 1 & --\\
        0 & 1 & 1 & 0 & III\\
        \cellcolor{mygreen}0 & \cellcolor{mygreen}1 & \cellcolor{mygreen}1 & 1 & --\\
        1 & 0 & 0 & 0 & II\\
        \cellcolor{mygreen}1 & \cellcolor{mygreen}0 & \cellcolor{mygreen}0 & 1 & --\\
        1 & 0 & 1 & 0 & II\\
        \cellcolor{mygreen}1 & \cellcolor{mygreen}0 & \cellcolor{mygreen}1 & 1 & --\\
        1 & 1 & 0 & 0 & II\\
        \cellcolor{mygreen}1 & \cellcolor{mygreen}1 & \cellcolor{mygreen}0 & 1 & --\\
        1 & 1 & 1 & 0 & II\\
        \cellcolor{mygreen}1 & \cellcolor{mygreen}1 & \cellcolor{mygreen}1 & 1 & --
    \end{tabular}
    \caption{The clause $v_{N+m} \leftrightarrow C_m$ leads to four clauses in CNF, see \cref{eq:implicationcnf}, labelled I, II, III, and IV. All possible assignments to the four involved variables are listed alongside the clause which they violate, if any. Surviving assignments to the original variables are highlighted (\textcolor{mygreen}{$\blacksquare$}), illustrating that their relative weights do not change after restricting to the solutions of these clauses.
    \label{tab:excludedassignmentsfirstclauses}}
\end{table}

We now prove the statement for the random \ac{XOR}-clauses.
The $i$-th \ac{XOR} clause indicates whether an even number of clauses is satisfied from some subset $S_i\subseteq\{1,2,\dots,M\}$, where each $m\in\{1,2,\dots,M\}$ has probability $\frac{1}{2}$ to be in $S_i$. 
The added constant to the \ac{XOR}-clause ensures that it is satisfied by the assignment of \true{} to all variables.
Consider an assignment $\sigma\in\{\true,\false\}^N$, that fulfills a set of clauses with indices
\begin{equation}
    T = \{m | C_m(\sigma) = \true \}.
\end{equation}
This assignment $\sigma$ fulfills the \ac{XOR}-clause $X_i$ if the number of clauses not satisfied by $T$ that are included in $S_i$ is even, i.e.
\begin{equation}
 |S_i\setminus T|\bmod 2 = 0.
\end{equation}
The probability for this to happen is
\begin{equation}
    P(|S_i\setminus T| \text{ is even}) = \left\{\begin{array}{cc}
         1 & \text{ if } |T|=M\\
         \frac{1}{2} & \text{else.} 
    \end{array}
    \right.
\end{equation}
The first case corresponds to $|S_i\setminus T| = 0$, which is always even. 
For the second case there is one index $m$ which is not in $T$, but it is in $S_i$ with probability $\frac{1}{2}$. Thus $|S_i\setminus T|$ is even and odd with equal probability.
Considering both cases we have that $P(|S_i\setminus T| \text{ is even})$ is equal to or larger than $\frac{1}{2}$ and strictly larger than $\frac{1}{2}$ when averaged over $T$, i.e. when averaged over $\sigma$.
The $M$ \ac{XOR} clauses gradually exclude $v_{N+m}=\false$ for all $m$ ($M$ variables and $M$ independent constraints).

Suppose that the set of solutions before adding $X_i$ is $\zeta(e)$. 
For every $\sigma\in \zeta(e)$ the probability that it violates $X_i$ is less than one half.
This implies that on average 
\begin{equation}
    \tr\left(\Pi(X_i) \proj{\zeta(e)}\right) = \frac{\zeta(e\land X_i)}{\zeta(e)} \geq \frac{1}{2}.
\end{equation}

Finally, we look at the single-literal clauses $v_{N+m}$. As the $M$ \ac{XOR} clauses are expected to be independent conditions, together they enforce $v_{N+m}=\true$ with high probability. Therefore the success probability of the measurement associated with the last $M$ clauses is one with high probability. These clauses are not included for their effect on the solution space, but for technical reasons, as they proved to be useful above.
\end{proof}

We remark that the effect of the dilution can also be achieved by using the original clauses directly in the \ac{XOR} clauses without the additional variables.
This allows to keep the number of variables equal to $N$, at the cost of slightly more complicated measurements (which are still efficiently implementable, though).

\section{Example for the hypothetical algorithm}
\label{sec:example}
We illustrate \cref{alg:solveFusingN} for the example formula with $M=10$ clauses in $N=5$ variables,
\begin{equation}
\begin{aligned}
    f=&\hphantom{{}\land{}} (\neg v_4\lor \neg v_2\lor v_3)\land (v_2\lor \neg v_4\lor \neg v_5)\land (v_4\lor \neg v_3\lor \neg v_1)\land (v_5\lor v_2\lor v_4)\\
    &\land (v_4\lor v_2\lor \neg v_5)\land (v_5\lor \neg v_3\lor \neg v_4)\land (v_1\lor v_2\lor v_3)\land (\neg v_1\lor v_5\lor v_3)\\
    &\land (\neg v_3\lor \neg v_5\lor \neg v_4)\land (v_4\lor \neg v_2\lor v_1).
\end{aligned}
\end{equation}
If we would just measure these clauses without dilution, the success probabilities would be
\begin{equation}
    (p_m)_m = \left(\frac{7}{8},\frac{6}{7},\frac{5}{6},\frac{17}{20},\frac{14}{17},\frac{5}{7},\frac{9}{10},\frac{7}{9},\frac{5}{7},\frac{1}{5}\right),
\end{equation}
which illustrates why the dilution (or a similar technique) is required for the lower bound on the success probability.
See also Appendix~\ref{sec:examplebadordering} for a more extreme example.
The dilution leads to larger formulas with $N'=N+M=15$ and $M'=6 M =60$.
A possible instance of the diluted formula is
\begin{equation}
\begin{aligned}
\mathcal{D}(f)=&\hphantom{{}\land{}}(\neg v_4\lor \neg v_2\lor v_3\lor \neg v_6)\land (v_4\lor v_6)\land (v_2\lor v_6)\land (\neg v_3\lor v_6)\\
&\land (v_2\lor \neg v_4\lor \neg v_5\lor \neg v_7)\land (\neg v_2\lor v_7)\land (v_4\lor v_7)\land (v_5\lor v_7)\\
&\land (v_4\lor \neg v_3\lor \neg v_1\lor \neg v_8)\land (\neg v_4\lor v_8)\land (v_3\lor v_8)\land (v_1\lor v_8)\\
&\land (v_5\lor v_2\lor v_4\lor \neg v_9)\land (\neg v_5\lor v_9)\land (\neg v_2\lor v_9)\land (\neg v_4\lor v_9)\\
&\land (v_4\lor v_2\lor \neg v_5\lor \neg v_{10})\land (\neg v_4\lor v_{10})\land (\neg v_2\lor v_{10})\land (v_5\lor v_{10})\\
&\land (v_5\lor \neg v_3\lor \neg v_4\lor \neg v_{11})\land (\neg v_5\lor v_{11})\land (v_3\lor v_{11})\land (v_4\lor v_{11})\\
&\land (v_1\lor v_2\lor v_3\lor \neg v_{12})\land (\neg v_1\lor v_{12})\land (\neg v_2\lor v_{12})\land (\neg v_3\lor v_{12})\\
&\land (\neg v_1\lor v_5\lor v_3\lor \neg v_{13})\land (v_1\lor v_{13})\land (\neg v_5\lor v_{13})\land (\neg v_3\lor v_{13})\\
&\land (\neg v_3\lor \neg v_5\lor \neg v_4\lor \neg v_{14})\land (v_3\lor v_{14})\land (v_5\lor v_{14})\land (v_4\lor v_{14})\\
&\land (v_4\lor \neg v_2\lor v_1\lor \neg v_{15})\land (\neg v_4\lor v_{15})\land (v_2\lor v_{15})\land (\neg v_1\lor v_{15})\\
&\land (v_{10}\veebar v_{11}\veebar v_{12}\veebar v_{13}\veebar v_{14}\veebar v_7\veebar v_8)\\
&\land \neg (v_{11}\veebar v_{14}\veebar v_8\veebar v_9)\\
&\land (v_{12}\veebar v_{13}\veebar v_{14}\veebar v_{15}\veebar v_8)\\
&\land \neg (v_{10}\veebar v_{11}\veebar v_8\veebar v_9)\\
&\land (v_{12}\veebar v_{13}\veebar v_{14}\veebar v_7\veebar v_8)\\
&\land (v_{11}\veebar v_{12}\veebar v_{15}\veebar v_7\veebar v_9)\\
&\land \neg (v_{10}\veebar v_{11}\veebar v_{12}\veebar v_9\veebar \neg v_6)\\
&\land \neg (v_{11}\veebar v_{12}\veebar v_{15}\veebar v_7\veebar v_8\veebar v_9)\\
&\land \neg (v_{12}\veebar v_{15}\veebar v_7\veebar v_8\veebar \neg v_6)\\
&\land (v_{14}\veebar v_{15}\veebar v_9\veebar \neg v_6)\\
&\land v_6\land v_7\land v_8\land v_9\land v_{10}\land v_{11}\land v_{12}\land v_{13}\land v_{14}\land v_{15}.
\end{aligned}
\label{eq:exampledilution}
\end{equation}
\cref{tab:example_solutions} shows which assignments are removed from the solution set by adding these \ac{XOR} clauses.
\begin{table}[htbp]
    \centering
\begin{tabular}{ccccc|cccccccccc}
$v_1$ & $v_2$ & $v_3$ & $v_4$ & $v_5$ & $v_6$ & $v_7$ & $v_8$ & $v_9$ & $v_{10}$ & $v_{11}$ & $v_{12}$ & $v_{13}$ & $v_{14}$ & $v_{15}$\\
\hline
\rowcolor{mygreen} 1 & 1 & 1 & 1 & 1 & 1 & 1 & 1 & 1 & 1 & 1 & 1 & 1 & 0 & 1 \\
\rowcolor{mygreen} 1 & 1 & 1 & 1 & 0 & 1 & 1 & 1 & 1 & 1 & 0 & 1 & 1 & 1 & 1 \\
\rowcolor{mygreen} 1 & 1 & 1 & 0 & 1 & 1 & 1 & 0 & 1 & 1 & 1 & 1 & 1 & 1 & 1 \\
\rowcolor{mygreen}1 & 1 & 1 & 0 & 0 & 1 & 1 & 0 & 1 & 1 & 1 & 1 & 1 & 1 & 1 \\
\rowcolor{mygrey} 1 & 1 & 0 & 1 & 1 & 0 & 1 & 1 & 1 & 1 & 1 & 1 & 1 & 1 & 1 \\
\rowcolor{mygreen} 1 & 1 & 0 & 1 & 0 & 0 & 1 & 1 & 1 & 1 & 1 & 1 & 0 & 1 & 1 \\
 1 & 1 & 0 & 0 & 1 & 1 & 1 & 1 & 1 & 1 & 1 & 1 & 1 & 1 & 1 \\
\rowcolor{mygreen} 1 & 1 & 0 & 0 & 0 & 1 & 1 & 1 & 1 & 1 & 1 & 1 & 0 & 1 & 1 \\
\rowcolor{myblue} 1 & 0 & 1 & 1 & 1 & 1 & 0 & 1 & 1 & 1 & 1 & 1 & 1 & 0 & 1 \\
\rowcolor{mygreen} 1 & 0 & 1 & 1 & 0 & 1 & 1 & 1 & 1 & 1 & 0 & 1 & 1 & 1 & 1 \\
\rowcolor{myblue} 1 & 0 & 1 & 0 & 1 & 1 & 1 & 0 & 1 & 0 & 1 & 1 & 1 & 1 & 1 \\
\rowcolor{mygreen} 1 & 0 & 1 & 0 & 0 & 1 & 1 & 0 & 0 & 1 & 1 & 1 & 1 & 1 & 1 \\
\rowcolor{mygreen} 1 & 0 & 0 & 1 & 1 & 1 & 0 & 1 & 1 & 1 & 1 & 1 & 1 & 1 & 1 \\
\rowcolor{mygreen} 1 & 0 & 0 & 1 & 0 & 1 & 1 & 1 & 1 & 1 & 1 & 1 & 0 & 1 & 1 \\
\rowcolor{mygreen} 1 & 0 & 0 & 0 & 1 & 1 & 1 & 1 & 1 & 0 & 1 & 1 & 1 & 1 & 1 \\
\rowcolor{mygreen} 1 & 0 & 0 & 0 & 0 & 1 & 1 & 1 & 0 & 1 & 1 & 1 & 0 & 1 & 1 \\
\rowcolor{mygreen} 0 & 1 & 1 & 1 & 1 & 1 & 1 & 1 & 1 & 1 & 1 & 1 & 1 & 0 & 1 \\
\rowcolor{mygreen} 0 & 1 & 1 & 1 & 0 & 1 & 1 & 1 & 1 & 1 & 0 & 1 & 1 & 1 & 1 \\
\rowcolor{myorange} 0 & 1 & 1 & 0 & 1 & 1 & 1 & 1 & 1 & 1 & 1 & 1 & 1 & 1 & 0 \\
\rowcolor{myorange}  0 & 1 & 1 & 0 & 0 & 1 & 1 & 1 & 1 & 1 & 1 & 1 & 1 & 1 & 0 \\
\rowcolor{mygrey} 0 & 1 & 0 & 1 & 1 & 0 & 1 & 1 & 1 & 1 & 1 & 1 & 1 & 1 & 1 \\
\rowcolor{mygrey} 0 & 1 & 0 & 1 & 0 & 0 & 1 & 1 & 1 & 1 & 1 & 1 & 1 & 1 & 1 \\
\rowcolor{myorange}  0 & 1 & 0 & 0 & 1 & 1 & 1 & 1 & 1 & 1 & 1 & 1 & 1 & 1 & 0 \\
\rowcolor{myorange}  0 & 1 & 0 & 0 & 0 & 1 & 1 & 1 & 1 & 1 & 1 & 1 & 1 & 1 & 0 \\
\rowcolor{myblue} 0 & 0 & 1 & 1 & 1 & 1 & 0 & 1 & 1 & 1 & 1 & 1 & 1 & 0 & 1 \\
\rowcolor{mygreen} 0 & 0 & 1 & 1 & 0 & 1 & 1 & 1 & 1 & 1 & 0 & 1 & 1 & 1 & 1 \\
\rowcolor{mygreen} 0 & 0 & 1 & 0 & 1 & 1 & 1 & 1 & 1 & 0 & 1 & 1 & 1 & 1 & 1 \\
\rowcolor{myblue} 0 & 0 & 1 & 0 & 0 & 1 & 1 & 1 & 0 & 1 & 1 & 1 & 1 & 1 & 1 \\
\rowcolor{myorange}  0 & 0 & 0 & 1 & 1 & 1 & 0 & 1 & 1 & 1 & 1 & 0 & 1 & 1 & 1 \\
\rowcolor{mygreen} 0 & 0 & 0 & 1 & 0 & 1 & 1 & 1 & 1 & 1 & 1 & 0 & 1 & 1 & 1 \\
\rowcolor{myorange}  0 & 0 & 0 & 0 & 1 & 1 & 1 & 1 & 1 & 0 & 1 & 0 & 1 & 1 & 1 \\
\rowcolor{mygreen} 0 & 0 & 0 & 0 & 0 & 1 & 1 & 1 & 0 & 1 & 1 & 0 & 1 & 1 & 1 \\
\end{tabular}
    \caption{A list of all solutions to $\mathcal{D}(f)$ at the point where the \ac{XOR} clauses are added. Colors indicate which solutions are removed by $C_{41}'$ (\textcolor{mygreen}{$\blacksquare$}), $C_{42}'$ (\textcolor{myblue}{$\blacksquare$}), $C_{43}'$ (\textcolor{myorange}{$\blacksquare$}), and $C_{47}'$ (\textcolor{mygrey}{$\blacksquare$}). Finally the white row shows the unique solution of $f$. Here we used the clauses shown in \cref{eq:exampledilution}, which illustrate possible values for the random \ac{XOR} clauses.}
    \label{tab:example_solutions}
\end{table}
We numerically estimate the average success probabilities for the diluted formula as
\begin{equation}
\begin{aligned}
    (p_m')_m =& (0.9375, 0.733333, 0.818182, 0.888889,\\
    &0.9375, 0.733333, 0.818182, 0.888889,\\
&0.9375, 0.733333, 0.818182, 0.888889,\\
&0.9375, 0.733333, 0.818182, 0.888889,\\
&0.9375, 0.733333, 0.818182, 0.888889,\\
&0.9375, 0.733333, 0.818182, 0.888889,\\
&0.9375, 0.733333, 0.818182, 0.888889, \\
&0.9375, 0.733333, 0.818182, 0.888889,\\
&0.9375, 0.733333, 0.818182, 0.888889,\\
&0.9375, 0.733333, 0.818182, 0.888889,\\
&0.515088, 0.53246, 0.570157, 0.653825, 0.765144,\\ &0.860789, 0.92324, 0.960468, 0.980077, 0.9899,\\
&0.998856, 0.998448, 0.998458, 0.998664, 0.998874,\\ &0.999162, 0.999452, 0.999427, 0.999433, 0.99928).
\end{aligned}
\end{equation}
This numerically confirms that the success probability for each measurement is larger than $\frac{1}{2}$.

We now show the action of $\mathcal{N}$. Let us assume that we already successfully measured until the 42\textsuperscript{nd} clause $C_{42}'$. 
Then $e=\bigwedge_{i=1}^{42} C_i'$. 
At this stage the state is
\begin{equation}
\begin{aligned}
    \ket{\psi_{42}}=&\frac{1}{\sqrt{10}}(\hphantom{{}+{}} 
    \ket{000011111010111}
    + \ket{000111011110111}
    + \ket{010001111111110}\\
    &\hphantom{\frac{1}{\sqrt{10}}( }+ \ket{010011111111110}
    + \ket{010100111111111}
    + \ket{010110111111111}\\
    &\hphantom{\frac{1}{\sqrt{10}}( }+ \ket{011001111111110}
    + \ket{011011111111110}
    + \ket{110011111111111}\\
    &\hphantom{\frac{1}{\sqrt{10}}( }+ \ket{110110111111111}).
\end{aligned}
\end{equation}
Assume that next we obtain the outcome $\Pi(\lnot C_{43}')$, projecting onto the state
\begin{equation}
\begin{aligned}
    \ket{\nu} =&\frac{1}{\sqrt{6}}(\hphantom{{}+{}} 
     \ket{000011111010111}
    +\ket{000111011110111}
    +\ket{010001111111110}\\
    &\hphantom{\frac{1}{\sqrt{6}}(}+\ket{010011111111110}
    +\ket{011001111111110}
    +\ket{011011111111110}).
\end{aligned}
\end{equation}
We lost the amplitude for the solution of $f$. But we recover by measuring $\mathcal{N}(e)$.
\begin{alignat}{2}
P_0(e) Q_y P_0(e) \ket{\nu} =& P_0(e) Q_y \ket{\nu}\\
=& P_0(e) \ket{\pm}^{\otimes N'}\\
=& \ket{\psi_{42}}. \quad\text{(up to phases)}
\end{alignat}
At this point we can try to measure $C_{43}'$ again. Note, however, since this is a random \ac{XOR}-clause, it will most likely take a different value.

\section{Robustness of the main results}
\label{app:robustness}
In the main text the results are formulated w.r.t. exact operations. 
Most importantly only the exact implementation of $\mathcal{N}(F)$ is considered.
Thus only the unobservability of this exact operation is proven.
However, in reality any measurement is only implemented approximately anyways, as only a finite precision is achievable in any experiment.
It is therefore important to look at the robustness of our main results when considering approximate implementations of the discussed operations.
In this appendix, we provide some comments on this topic, while we must defer a precise treatment to subsequent work. 



Let $L = R m$ be the maximal number of operations in the algorithm. 
We now describe why any measurement that is $\epsilon/L$ close to the sandwich-POVM allows to solve F in polynomial time within an error $\epsilon$.

\begin{lemma}[Robustness of noisy algorithms]\label{thm:CPTP_composition}
Let $U=U_L \circ U_{L-1} \circ ... \circ U_1$ and $U=V_L \circ V_{L-1} \circ ... \circ V_1$ be compositions of quantum operations (CPTP maps) which are similar in the sense that
\begin{equation}
    ||U_i - V_i||_\diamond < \frac{\epsilon}{L}\quad\forall i\in\{1,2,...,L\}.
\end{equation}
Then the expectation value of a projector $\Pi$ on any input state $\rho$ differs at most by $\epsilon$, i.e.
\begin{equation}
    |\tr \left( U(\rho) \Pi \right)-\tr \left( V(\rho) \Pi \right)| < \epsilon
\end{equation}
for all states $\rho$.
\end{lemma}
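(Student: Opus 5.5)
The plan is a standard hybrid (telescoping) argument combined with two facts: the diamond norm is submultiplicative under composition with CPTP maps, and trace distance controls the expectation value of a projector. (The statement writes $U$ twice; the second composition is meant to be $V=V_L\circ\dots\circ V_1$.)

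\textbf{Step 1: telescoping decomposition.} For $k=0,\dots,L$ I define the hybrid maps
\begin{equation}
W_k = U_L\circ\dots\circ U_{k+1}\circ V_k\circ\dots\circ V_1 ,
\end{equation}
so that $W_0=U$ and $W_L=V$. Then
\begin{equation}
U-V=\sum_{k=1}^{L}(W_{k-1}-W_k),
\end{equation}
and each summand factors as
\begin{equation}
W_{k-1}-W_k = U_L\circ\dots\circ U_{k+1}\circ (U_k-V_k)\circ V_{k-1}\circ\dots\circ V_1 .
\end{equation}

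\textbf{Step 2: bounding each summand.} I use $\|\Phi\circ\Delta\circ\Psi\|_\diamond\le\|\Phi\|_\diamond\|\Delta\|_\diamond\|\Psi\|_\diamond$ together with $\|\Phi\|_\diamond=1$ for every CPTP map $\Phi$. Since compositions of CPTP maps are again CPTP, this gives $\|W_{k-1}-W_k\|_\diamond\le\|U_k-V_k\|_\diamond<\epsilon/L$. By the triangle inequality,
\begin{equation}
\|U-V\|_\diamond<\epsilon .
\end{equation}

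\textbf{Step 3: passing to the projector.} For any state $\rho$, set $\Delta=U(\rho)-V(\rho)$. This is Hermitian and traceless, and
\begin{equation}
\|\Delta\|_1\le\|(U-V)\otimes\mathrm{id}(\rho\otimes\cdot)\|_1\le\|U-V\|_\diamond
\end{equation}
(or more simply, trivial ancilla suffices since the diamond norm dominates the induced trace norm). Writing $\Delta=\Delta_+-\Delta_-$, Hermiticity and $0\le\Pi\le\mathds{1}$ give
\begin{equation}
|\tr(\Delta\Pi)|\le\max\bigl(\tr\Delta_+,\tr\Delta_-\bigr)=\tfrac12\|\Delta\|_1,
\end{equation}
and in any case $|\tr(\Delta\Pi)|\le\|\Pi\|_\infty\|\Delta\|_1\le\|\Delta\|_1$ by Hölder. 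Either bound yields $|\tr(U(\rho)\Pi)-\tr(V(\rho)\Pi)|<\epsilon$, which is the claim. Note the strict inequality survives because Step 2 already gives a strict bound.

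\textbf{Expected obstacle.} The only delicate point is Step 2: the submultiplicativity and unit norm of CPTP maps in the diamond norm. I will cite these as standard facts; if a justification is needed, they follow from the fact that $\Phi\otimes\mathrm{id}$ is trace-norm contractive for CPTP $\Phi$. A secondary caveat concerns the adaptive structure of \cref{alg:solveFusingN}: classical outcomes and branching have to be modelled as CPTP maps, namely instruments writing into a classical register. Once this is done, the lemma applies to the whole algorithm with $L$ steps.
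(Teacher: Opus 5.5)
Your proposal is correct and follows essentially the same route as the paper: Hölder's inequality with $\lVert\Pi\rVert_\infty = 1$ bounds the difference of expectation values by $\lVert U(\rho)-V(\rho)\rVert_1 \le \lVert U-V\rVert_\diamond$, and this is in turn bounded by $\sum_i \lVert U_i - V_i\rVert_\diamond < \epsilon$. The differences are that you spell out the hybrid (telescoping) argument behind that last subadditivity step, which the paper uses without comment, and that you state the comparison $\max_\rho \lVert U(\rho)-V(\rho)\rVert_1 \le \lVert U-V\rVert_\diamond$ as an inequality; the paper writes it as an equality, which does not hold in general because the diamond norm allows an ancilla.
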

\begin{proof}
\begin{alignat}{2}
    |\tr \left( U(\rho) \Pi \right)-\tr \left( V(\rho) \Pi \right)|=&|\tr \left( (U(\rho)-V(\rho)) \Pi \right)|\\
    \leq& \tr \left|\left( (U(\rho)-V(\rho)) \Pi\right| \right)\\
    =& \left\lVert(U(\rho)-V(\rho)) \Pi\right\rVert_1\\
    \leq& \left\lVert(U(\rho)-V(\rho)) \right\rVert_1 \left\lVert\Pi\right\rVert_\infty\\
    =&\left\lVert(U(\rho)-V(\rho)) \right\rVert_1 \\
    \leq & \max_\rho \left\lVert(U(\rho)-V(\rho)) \right\rVert_1\\
    =&\left\lVert U-V \right\rVert_\diamond\\
    \leq&\sum_i \left\lVert U_i-V_i \right\rVert_\diamond\\
    <&L \frac{\epsilon}{L} = \epsilon
\end{alignat}
\end{proof}
\begin{theorem}[Robustness of unobservables]
Let $L$ be the maximal number of operations in the execution of the algorithm.
Any measurement $\mathcal{N}_{\mathrm{appr.}}$ with
\begin{equation}
    \lVert\mathcal{N}-\mathcal{N}_{\mathrm{appr.}}\rVert_{\diamond} < \frac{\epsilon}{L}
\end{equation}
allows to decide whether $F$ is satisfiable in polynomial time, as the success probability differs from the ideal one only by
\begin{equation}
    1-\tr\left(\rho_{\mathrm{final}} \Pi(F)\right) < \epsilon.
\end{equation}
\end{theorem}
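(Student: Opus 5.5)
The plan is to reduce the statement to Lemma~\ref{thm:CPTP_composition}. I would model the execution of \cref{alg:solveFusingN} on the diluted formula $\mathcal{D}(F)$ as a sequence of at most $L=Rm$ quantum instruments. These are the clause measurements $\mathcal{M}(C_i')$ and, when the outcome is unwanted, the sandwich-POVM $\mathcal{N}(e)$ for the current established formula $e$.

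The first step is to handle the classical control. The choice of the next operation depends on earlier outcomes, so I would write each step as a single CPTP map on the joint system. This joint system consists of the qubits together with a classical register that stores the outcome history and the current $e$. Each step then becomes a controlled channel $U_i=\sum_h \proj{h}\otimes \Phi_{i,h}$. Here $\Phi_{i,h}$ is either the ideal $\mathcal{M}(C_i')$ (with outcome recorded) or the ideal $\mathcal{N}(e_h)$ (with Kraus operators $M_{x,y}$ and outcome recorded). Runs that stop early are padded with identity channels, so that there are exactly $L$ steps. The approximate run is obtained by replacing each $\mathcal{N}(e_h)$ with $\mathcal{N}_{\mathrm{appr.}}(e_h)$ and keeping $\mathcal{M}$ exact, which gives channels $V_i$.

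The second step is to check the hypothesis of Lemma~\ref{thm:CPTP_composition}. Because the register is classical and block diagonal, the diamond norm of a classically controlled difference is the maximum over branches: $\lVert U_i-V_i\rVert_\diamond\le\max_h\lVert \Phi_{i,h}-\Phi'_{i,h}\rVert_\diamond<\epsilon/L$. This uses the assumed bound on $\lVert\mathcal{N}-\mathcal{N}_{\mathrm{appr.}}\rVert_\diamond$, applied uniformly to every $e$ that appears. I would prove the maximum bound directly: for an input $\sigma$ on register, system and ancilla, dephase the register first (this leaves both channels unchanged), and then use convexity of the trace norm.

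The third step applies the lemma with the final computational-basis measurement folded into $\Pi=\Pi(F)\otimes\mathds{1}$, which acts on the first $N$ qubits. This gives $|\tr(U(\rho_0)\Pi)-\tr(V(\rho_0)\Pi)|<\epsilon$. For satisfiable $F$, the ideal run ends in a state supported on solutions, by (D1) and the undoing computation in Appendix~\ref{sec:algorithm}, except for the failure probability $2^{-R}$. Hence $\tr(\rho_{\mathrm{final}}\Pi(F))>1-\epsilon-2^{-R}$, and absorbing $2^{-R}$ into $\epsilon$ (or conditioning on non-abort) yields the claimed bound. For unsatisfiable $F$, the output is checked classically, so no false positive can occur. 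Polynomial runtime follows because $L$ and the number of steps are polynomial by (D2).

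The main obstacle I expect is the adaptivity and the abort branches. It is not obvious that a single diamond-norm bound covers $\mathcal{N}(e)$ for all intermediate $e$. There is also the fact that the ideal success is only $1-2^{-R}$ rather than $1$, so the stated inequality really needs $R$ large enough relative to $\epsilon$. I would state explicitly that the approximation hypothesis is taken uniformly over all $e$ arising in the run.
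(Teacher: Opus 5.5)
Your proposal is correct and takes the same route as the paper. The paper's proof is just the one-line application of \cref{thm:CPTP_composition}, with $U_i,V_i$ the ideal and approximate versions of each operation: $\mathcal{M}$ is kept exact and $\mathcal{N}$ is replaced by $\mathcal{N}_{\mathrm{appr.}}$. Your classical-register treatment of adaptivity, the explicit uniformity of the hypothesis over all $e$, and the bookkeeping of the ideal abort probability are details the paper leaves implicit, and they do not change the argument. One small correction: by a union bound over the $6M$ clauses, the ideal abort probability accumulates to at most $6M\cdot 2^{-R}$, not $2^{-R}$.
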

\begin{proof}
Use \cref{thm:CPTP_composition} with $U_i$ and $V_i$ the ideal and noisy versions of the $i$-th operation ($\mathcal{M}$, which is assumed to be ideal, or $\mathcal{N}$), respectively.
\end{proof}

There is a ball around the ideal sandwich-POVM of approximate sandwich-POVMs which are also not observable. In this sense our result is robust.

This consideration translates to the consequences discussed in the second part of the main paper. Most importantly, any quantum operation close to the unitary channel $t(\rho)=U\rho U^{\dagger}$ cannot correspond to a physical quantum operation, because they would allow to implement $\mathcal{N}$ approximately, which is not possible as discussed above. 
This also implies that the map from $\rho$ to $\tilde{\rho}$ cannot be implemented approximately.

\end{document}